\numberwithin{equation}{section}
\theoremstyle{plain}
\newtheorem{proposition}{Proposition}
\newtheorem{corollary}[proposition]{Corollary}
\newtheorem{lemma}[proposition]{Lemma}
\newtheorem{theorem}[proposition]{Theorem}
\theoremstyle{definition}
\newtheorem{example}[proposition]{Example}
\newtheorem{remark}[proposition]{Remark}
\DeclareMathOperator{\diag}{diag}
\def\cleardoublepage{\clearpage\if@twoside \ifodd\c@page\else%
    \hbox{}%
    \thispagestyle{empty}%
    \newpage%
    \if@twocolumn\hbox{}\newpage\fi\fi\fi}
\def\figurename{Figure}
\renewcommand{\fnum@figure}[1]{\figurename~\thefigure.}
\def\tablename{Table}
\renewcommand{\fnum@table}[1]{\tablename~\thetable.}
\begin{document}

\title{
{
\vskip 0.45in
\bfseries\scshape Inverse problems associated with integrable equations of Camassa-Holm type; explicit formulas on the real axis, I}}
\author{\bfseries\itshape Keivan Mohajer \thanks{Department of Mathematics,  University of Isfahan, Isfahan, 81746-73441, Iran; k.mohajer@sci.ui.ac.ir}
\and
\bfseries\itshape Jacek Szmigielski\thanks{Department of Mathematics and Statistics, University of Saskatchewan, 106 Wiggins Road, Saskatoon, Saskatchewan, S7N 5E6, Canada; szmigiel@math.usask.ca}
}

\date{}

\maketitle
\begin{abstract}
  The inverse problem which arises in the
  Camassa--Holm equation
  is revisited for the class of discrete densities.  The method of solution relies
  on the use of orthogonal polynomials.  The explicit formulas are
  obtained directly from the analysis on the real axis without any additional transformation
  to a ``string'' type boundary value problem known from prior works.
  \end{abstract}
\vspace{0.1 in}
\noindent {\bf Hongyou Wu in memoriam}
\vspace{.1 in}
%% Other situations:

\noindent \textbf{Key Words}: Camassa-Holm equation, inverse problems, orthogonal polynomials
\vspace{.08in} \noindent \textbf {AMS Subject Classification:} Primary 37K15.

\section{Introduction}
The purpose of this paper and its sequel is to present an alternative solution to two inverse problems
which appear in the context of Camassa-Holm type equations.  We concentrate in this paper on the Camassa-Holm equation (CH)\cite{camassa-holm}:
\begin{equation}
  \label{eq:CH}
  u_t - u_{xxt} + 3u u_x = 2u_x u_{xx} + u u_{xxx},
\end{equation}
while in the sequel we study the inverse problem for another equation, discovered by V. Novikov (VN) \cite{novikov}:
\begin{equation}
  \label{eq:novikov}
  u_t - u_{xxt} + 4 u^2 u_x = 3 u u_x u_{xx} + u^2 u_{xxx}.
\end{equation}
Both these equations admit another, often used, representation
\begin{equation}\label{eq:CH-system}
m_t+m_xu+2mu_x=0, \qquad m=u-u_{xx},
\end{equation}
for the CH equation, and \begin{equation}
  \label{eq:novikov-system}
  m_t + (m_x u + 3 m u_x) \, u = 0, \qquad m = u - u_{xx},
\end{equation}
for the VN equation.   The solution to the inverse problem presented in this paper
is obtained directly on the real axis, instead of transforming the problem to the
inhomogeneous string problem as was done in \cite{beals-sattinger-szmigielski-moment}
and adapting the method used by T. Stieltjes in \cite{stieltjes}.

We give a self-contained presentation of the inverse problem for Equation \eqref{eq:BVP},
emphasizing the role of orthogonal polynomials (Proposition \ref{prop:orthogonality}) and
culminating in the elegant inverse formulas stated in Corollary \ref{cor:xm}.
In the sequel, for the case of the VN equation \eqref{eq:novikov}, we will show
that a different class of polynomials is germane to the inverse problem, namely a
class of biorthogonal polynomials (Cauchy biorthogonal polynomials) studied  in \cite{BGS1}.

\section{CH inspired inverse problem}

 The well known Lax pair system for the CH equation can be written as
 follows (see \cite{camassa-holm})
\begin{equation}\label{CH-system}
\begin{split}
&\bigl(1-D_x^2-zm\bigr)\psi=0, \\
&\bigl(D_t+(u+1/z)D_x-u_x/2-1/z \bigr)\psi=0.
\end{split}
\end{equation}
We would like to point out that this Lax pair yields a slightly
different normalization than in \eqref{eq:CH-system}, here,
$m=2u-\frac12 u_{xx}$. If we set
$m(x,t)=\sum_{j=1}^{n}m_j(t)\delta_{x_j}$ where $m_j(t)$ and
$x_j(t)$ are $C^{\infty}$ functions, we obtain what is called the
$n$-peakon Lax pair for the CH equation. We assume that for every
fixed $t\geq 0$ the function $\psi(x,t)$ is continuous on
$\mathbb{R}$ and $x_1(0)<x_2(0)<\dots < x_n(0)$. Now the first
equation of the system \eqref{CH-system} implies that on the
interval $(x_j,x_{j+1})$, we have
\begin{equation}
\psi=A_j e^x + B_j e^{-x},
\end{equation}
It is well known that requiring $\psi\rightarrow 0$
as $|x| \rightarrow \infty$ is consistent with the system
\eqref{CH-system}.   Let's
therefore assume that $A_0=1$ and $B_0=0$. Then from the
continuity of $\psi$ and the jump conditions imposed by the first
equation of \eqref{CH-system} we have
\begin{equation}
\begin{pmatrix}
A_j\\B_j
\end{pmatrix}
=K_j(z)\begin{pmatrix} 1\\0 \end{pmatrix},\ \ \ 1\leq j \leq n
\end{equation}
where
\begin{equation}
K_j(z)=T_j(z)T_{j-1}(z)\dots T_1(z),
\end{equation}
and
\begin{equation}
T_k(z)=
\begin{pmatrix}
1-\frac{z}{2}m_k & -\frac{z}{2}m_ke^{-2x_k}\\
\frac{z}{2}m_ke^{2x_k} & 1+\frac{z}{2}m_k
\end{pmatrix},\ \ \ \ 1\leq k \leq n.
\end{equation}
Clearly,
\begin{theorem}
The spectrum of the boundary value problem
\begin{equation}\label{eq:BVP}
\bigl(1-D_x^2-zm\bigr)\psi=0 , \qquad \psi \rightarrow 0, \text{ for $|x|\rightarrow \infty$ }
\end{equation}
is given by roots of $A_n(z)=0$.
\end{theorem}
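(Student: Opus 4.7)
The plan is to show that the two decay conditions at $\pm\infty$ collapse to the single polynomial equation $A_n(z)=0$. Away from the support of $m$, the equation $(1-D_x^2-zm)\psi=0$ reduces to $(1-D_x^2)\psi=0$, whose general solution on the interval $(x_j,x_{j+1})$ (with $x_0=-\infty$ and $x_{n+1}=+\infty$) is the exponential combination $\psi=A_j e^x+B_j e^{-x}$ already introduced. The theorem is therefore a boundary-value bookkeeping statement, and my proof would simply match the asymptotics to the coefficients $A_0,B_0$ and $A_n,B_n$ and then invoke the transfer-matrix construction.

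First I would handle the left endpoint. On $(-\infty,x_1)$ one has $\psi=A_0 e^x+B_0e^{-x}$, and the requirement $\psi\to 0$ as $x\to-\infty$ forces $B_0=0$; the residual scaling freedom in the homogeneous problem is then eliminated by the normalization $A_0=1$. Next I would handle the right endpoint. On $(x_n,+\infty)$ one has $\psi=A_n e^x+B_n e^{-x}$, and $\psi\to 0$ as $x\to+\infty$ forces $A_n(z)=0$ (the decaying mode $B_n e^{-x}$ is then automatically compatible with the other asymptotic condition).

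Second, I would observe that by the transfer-matrix relation
\begin{equation*}
\begin{pmatrix}A_n\\ B_n\end{pmatrix}=K_n(z)\begin{pmatrix}1\\ 0\end{pmatrix}=T_n(z)\cdots T_1(z)\begin{pmatrix}1\\ 0\end{pmatrix},
\end{equation*}
the scalar $A_n(z)$ is a polynomial in $z$ (each $T_k(z)$ is affine in $z$, so $\deg A_n\le n$). Conversely, given any root $z_0$ of $A_n$, the piecewise-exponential function built from the coefficients $\bigl(A_j(z_0),B_j(z_0)\bigr)$ produced by the transfer matrices satisfies the differential equation together with the prescribed continuity and jump conditions at each $x_k$, and by construction decays at both $\pm\infty$. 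Therefore $z_0$ is an eigenvalue, which gives the reverse inclusion and completes the proof.

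The argument has no genuine obstacle; the only thing to be careful about is that the derivation of the transfer matrices $T_k(z)$ in the paragraphs preceding the theorem already encodes both the continuity of $\psi$ and the correct jump in $\psi_x$ across each Dirac mass $m_k\delta_{x_k}$. Once that is taken as given, the theorem is essentially a one-line reading of the $x\to+\infty$ asymptotics of $\psi$.
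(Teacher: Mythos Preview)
Your argument is correct and is exactly the reasoning the paper has in mind: the paper offers no proof beyond the word ``Clearly,'' relying on the preceding transfer-matrix setup to make the statement self-evident, and your write-up simply spells out that implicit argument (decay at $-\infty$ fixes $B_0=0$, decay at $+\infty$ forces $A_n(z)=0$, and conversely any root of $A_n$ yields a genuine eigenfunction via the $T_k$'s).
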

In order to get more information about the spectrum we can rewrite equation
\eqref{eq:BVP} as an integral equation
\begin{equation*}
\psi(x,z)=z\int K(x,y) \psi(y,z)dM(x),
\end{equation*}
where $M$ is the distribution function of the measure $m(x)$ and $K(x,y)=\frac12 e^{-|x-y|}$ is the
unique Greens function of $1-D^2_x$ vanishing at $\pm \infty$.  One can easily write
the explicit form of this integral equation as:
\begin{equation} \label{eq:IE}
\psi(x,z)=z\sum_{j=1}^n \frac12 e^{-|x-x_j|} \psi(x_j,z)m_j,
\end{equation}
where, by assumption, all $m_j>0$.
Moreover, evaluating at $x_i$, and introducing the matrix notation
\begin{equation}\label{eq:E}
\pmb{\psi }=(\psi(x_1,z),\psi(x_2,z),\dots, \psi(x_n,z))^T, \qquad \mathbf{E}=\big[\frac12 e^{-|x_i-x_j|}\big], \end{equation}
 and
 \begin{equation}\label{eq:P}
 \mathbf{P}=\diag(m_1,m_2,\dots, m_n), \end{equation}
we obtain  equation \eqref{eq:IE} in the form of the matrix
eigenvalue problem:
\begin{equation*}
\pmb{\psi}=z\mathbf E \mathbf P \, \pmb{\psi}.
\end{equation*}
$\mathbf E$ is an example of a single-pair matrix (see \cite{gantmacher-krein}) which is oscillatory and so is $\mathbf E \mathbf P$.  Oscillatory matrices form a subset of {\sl totally nonnegative matrices}, which can
be characterized as matrices whose all minors of arbitrary
size are nonnegative.  Oscilatory matrices are special in that they have simple, strictly positive,
spectrum \cite{gantmacher-krein}.
We therefore have the following description of the spectrum of the boundary value problem
\eqref{eq:BVP}.
\begin{lemma}\label{lem:spectrum}
$A_n(z)=\det (I-z\mathbf E \mathbf P)={\displaystyle\prod_{1\leq k\leq n} \big(1-\frac{z}{\lambda_k}\big)} $ where $0<\lambda_1<\lambda_2<\dots <\lambda_n$
\end{lemma}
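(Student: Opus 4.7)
The plan is to deduce the lemma in two passes: first establish the factorization of $\det(I-z\mathbf{E}\mathbf{P})$ using the oscillatory structure of $\mathbf{E}\mathbf{P}$ recorded in the text, then identify $A_n(z)$ with this determinant using the spectral characterization provided by the preceding theorem and the integral equation \eqref{eq:IE}.

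For the factorization, the passage preceding the lemma already notes that $\mathbf{E}$ is a single-pair matrix and $\mathbf{E}\mathbf{P}$ is oscillatory, and by Gantmacher--Krein \cite{gantmacher-krein} oscillatory matrices have simple, strictly positive spectrum. Hence $\mathbf{E}\mathbf{P}$ has $n$ distinct positive eigenvalues $\mu_1>\mu_2>\cdots>\mu_n>0$, so
\begin{equation*}
\det(I-z\mathbf{E}\mathbf{P})=\prod_{k=1}^{n}(1-z\mu_k),
\end{equation*}
and relabeling $\lambda_k:=1/\mu_{n+1-k}$ produces the ordering $0<\lambda_1<\lambda_2<\cdots<\lambda_n$.

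For the identification of $A_n(z)$ with $\det(I-z\mathbf{E}\mathbf{P})$, the theorem immediately preceding the lemma identifies the spectrum of \eqref{eq:BVP} with the zero set of $A_n(z)$, while the integral-equation reformulation \eqref{eq:IE}, read as the matrix eigenproblem $\pmb{\psi}=z\mathbf{E}\mathbf{P}\pmb{\psi}$, recasts the same spectrum as the set of $z$ for which $\det(I-z\mathbf{E}\mathbf{P})=0$. Consequently, $A_n(z)$ vanishes at each of the $n$ distinct points $1/\mu_k$ produced above. Since $A_n(z)$ is a polynomial in $z$ of degree at most $n$---each $T_k(z)$ has entries linear in $z$, so the $(1,1)$ entry of the product of $n$ transfer matrices has degree at most $n$---these $n$ distinct roots exhaust its zero set, and $A_n(z)=c\prod_{k}(1-z\mu_k)$ for some constant $c$. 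Evaluating at $z=0$ and using $T_k(0)=I$, hence $K_n(0)=I$, gives $A_n(0)=1$, so $c=1$.

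The main obstacle, in my view, lies in the identification step: one needs both the set-level coincidence of zeros and the simplicity of the spectrum of $\mathbf{E}\mathbf{P}$ to conclude equality as polynomials. Without simplicity one would have to track multiplicities of the BVP eigenvalues directly, which is far less transparent. Fortunately the oscillatory property supplies this simplicity automatically, so the two steps interlock cleanly.
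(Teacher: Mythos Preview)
Your proof is correct and follows exactly the line the paper sketches in the paragraph preceding the lemma: the oscillatory property of $\mathbf{E}\mathbf{P}$ yields $n$ simple positive eigenvalues, and you then identify $A_n(z)$ with $\det(I-z\mathbf{E}\mathbf{P})$ via the shared zero set, the degree bound $\deg A_n\le n$, and the normalization $A_n(0)=1$ coming from $T_k(0)=I$. The paper gives no separate proof of the lemma, so your argument is precisely the intended one made explicit.
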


We need one more fact.
\begin{theorem}\label{thm:ST}
The Weyl function $W(z)=-\frac{B_n(z)}{A_n(z)}$ is a (shifted) Stieltjes transform of a measure $d\mu(x)=\sum_{j=1}^n
a_j \delta_{\lambda_j}, \, a_j>0$, that is
\begin{equation*}
W(z)=\gamma +\int \frac{1}{z-x}d\mu(x),
\end{equation*}
where $\gamma=\int \frac{1}{x}d\mu(x)$.

\end{theorem}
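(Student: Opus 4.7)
The plan proceeds in three phases: identify the rational structure of $W$; compute $\gamma$ via evaluation at the origin; and exhibit each residue $a_j$ in manifestly positive form via a Wronskian-trace identity. The third phase carries the real work.

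Because each transfer matrix $T_k(z)$ is affine in $z$, both $A_n(z)$ and $B_n(z)$ are polynomials of degree at most $n$, and Lemma \ref{lem:spectrum} gives that $A_n$ has simple roots at the distinct positive points $\lambda_1<\dots<\lambda_n$. Standard partial fractions therefore produce
\begin{equation*}
W(z)=\gamma+\sum_{j=1}^n\frac{a_j}{z-\lambda_j}, \qquad a_j=-\frac{B_n(\lambda_j)}{A_n'(\lambda_j)}, \qquad \gamma=\lim_{z\to\infty}W(z).
\end{equation*}
Since $T_k(0)=I$ for every $k$, the product telescopes to $K_n(0)=I$, so $A_n(0)=1$, $B_n(0)=0$, and $W(0)=0$; substituting $z=0$ into the expansion above forces $\gamma=\sum_j a_j/\lambda_j$, which is exactly the claimed identity $\gamma=\int x^{-1}\,d\mu(x)$.

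For positivity of the $a_j$, I would introduce the two fundamental solutions of $(1-D_x^2-zm)\psi=0$ normalized by $\psi_L(x,z)=e^x$ on $(-\infty,x_1)$ (so that $\psi_L=A_n(z)e^x+B_n(z)e^{-x}$ on $(x_n,\infty)$) and $\psi_R(x,z)=e^{-x}$ on $(x_n,\infty)$. Using the jump condition $[\psi']_{x_k}=-z\,m_k\,\psi(x_k)$ at each atom of $m$, one checks that $W[\psi_L,\psi_R]$ has zero jumps across the $x_k$ and is therefore globally constant; direct evaluation on $(x_n,\infty)$ gives $W[\psi_L,\psi_R]=-2A_n(z)$. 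Differentiating the ODE in $z$ produces $(1-D_x^2-zm)\partial_z\psi_L=m\,\psi_L$, and a parallel jump bookkeeping yields the distributional identity $\frac{d}{dx}W[\partial_z\psi_L,\psi_R]=m(x)\,\psi_L(x,z)\,\psi_R(x,z)$. Integrating from $-\infty$ to $+\infty$, the left boundary vanishes because $\partial_z\psi_L\equiv 0$ on $(-\infty,x_1)$, while the right boundary evaluates to $-2A_n'(z)$, producing the trace formula
\begin{equation*}
-2A_n'(z)=\sum_{k=1}^n m_k\,\psi_L(x_k,z)\,\psi_R(x_k,z).
\end{equation*}
At $z=\lambda_j$ the two fundamental solutions are proportional; matching their asymptotics on $(x_n,\infty)$ gives $\psi_L(\cdot,\lambda_j)=B_n(\lambda_j)\,\psi_R(\cdot,\lambda_j)$, and substitution into the trace formula yields
\begin{equation*}
a_j=-\frac{B_n(\lambda_j)}{A_n'(\lambda_j)}=\frac{2\,B_n(\lambda_j)^2}{\sum_{k=1}^n m_k\,\psi_L(x_k,\lambda_j)^2}>0,
\end{equation*}
since every $m_k>0$.

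The principal obstacle is the distributional bookkeeping at the atoms $x_k$ in phase three: both the constancy of $W[\psi_L,\psi_R]$ and the correct source term $m\psi_L\psi_R$ in the derivative identity require a short but careful jump computation against $[\psi']_{x_k}=-zm_k\psi(x_k)$. A subsidiary point is to rule out $B_n(\lambda_j)=0$, which I would handle by back-propagation: a vanishing $B_n(\lambda_j)$ forces $\psi_L(\cdot,\lambda_j)\equiv 0$ on $(x_n,\infty)$, and then continuity together with the jump condition propagates the zero backward across each $x_k$, eventually contradicting $\psi_L(x)=e^x$ on $(-\infty,x_1)$.
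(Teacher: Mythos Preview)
Your plan is correct and would produce a complete, self-contained proof. The paper, however, does not actually prove this theorem: its entire argument is a citation to Theorem~4.2 of \cite{beals-sattinger-szmigielski-moment}, together with the observation (which you also make) that $W(0)=0$ because $T_k(0)=I$. So your approach differs from the paper's simply by supplying content where the paper defers to the literature.

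Your Wronskian/trace-formula route is the classical spectral-theoretic way to establish residue positivity, and it goes through exactly as you outline: the jump computations give $[W[\psi_L,\psi_R]]_{x_k}=0$ and $[W[\partial_z\psi_L,\psi_R]]_{x_k}=m_k\psi_L(x_k)\psi_R(x_k)$, and the boundary evaluations are as you state. The back-propagation argument ruling out $B_n(\lambda_j)=0$ is also sound. An alternative, closer in spirit to the paper's setup, would be to work with the finite matrix problem $\pmb\psi=z\mathbf E\mathbf P\,\pmb\psi$ and the weighted inner product $\langle\pmb\phi,\pmb\psi\rangle=\pmb\phi^T\mathbf P\,\pmb\psi$, in which $\mathbf E\mathbf P$ becomes self-adjoint and residue positivity falls out of standard symmetric-eigenvalue considerations; but your continuous-variable argument is equally valid and makes the appearance of $B_n(\lambda_j)^2$ particularly transparent.
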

\begin{proof}
This theorem follows, after a slight adjustment of notation, from Theorem 4.2 proven in \cite{beals-sattinger-szmigielski-moment}
and the observation that $W(0)=0$ which is a consequence of $T_k(0)=I$.
\end{proof}
The continued fraction expansion (\cite{stieltjes}) turns out to be at the heart of the inverse problem, in a full
analogy with the inverse problem for an inhomogeneous string which was studied
in the 1950's by M.G. Krein (\cite{Kreinstring}, \cite{dym-mckean-gaussian}).
\begin{theorem}
\begin{equation}
\begin{split}
&\frac{B_n}{A_n}=-e^{2x_n}+\\
&\cfrac{1}{-\frac{z}{2}m_ne^{-2x_n}+\cfrac{1}{e^{2x_n}-e^{2x_{n-1}}+\cfrac{1}{-\frac{z}{2}m_{n-1}e^{-2x_{n-1}}+\cfrac{1}{{\ddots}\
\ \
_{+\cfrac{1}{-\frac{z}{2}m_1e^{-2x_1}+\cfrac{1}{e^{2x_1}}}}}}}},
\end{split}
\end{equation}
\end{theorem}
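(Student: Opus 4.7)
The plan is to convert the statement into a recursion for the M\"obius-type ratio $v_j := B_j/A_j$ and then unroll it.

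First I would compute $v_j$ in terms of $v_{j-1}$ using $(A_j,B_j)^T = T_j(z)(A_{j-1},B_{j-1})^T$ and the explicit form of $T_j$. Writing $\alpha_j = \frac{z}{2}m_j$ for brevity, this gives
$$v_j = \frac{\alpha_j e^{2x_j} + (1+\alpha_j)\,v_{j-1}}{(1-\alpha_j) - \alpha_j e^{-2x_j}\,v_{j-1}}, \qquad v_0=0.$$

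The crucial observation is to introduce the shifted variable $r_j := v_j + e^{2x_j}$. Since $(1-\alpha_j) - \alpha_j e^{-2x_j}v_{j-1} = 1 - \alpha_j e^{-2x_j}(v_{j-1}+e^{2x_j})$, a short algebraic manipulation of the numerator of $v_j + e^{2x_j}$ yields
$$r_j = \frac{v_{j-1}+e^{2x_j}}{1-\alpha_j e^{-2x_j}(v_{j-1}+e^{2x_j})}, \qquad \frac{1}{r_j} = -\alpha_j e^{-2x_j} + \frac{1}{v_{j-1}+e^{2x_j}}.$$
Rewriting $v_{j-1}+e^{2x_j} = r_{j-1} + (e^{2x_j}-e^{2x_{j-1}})$ produces the two-level recursion
$$\frac{1}{r_j} = -\alpha_j e^{-2x_j} + \cfrac{1}{(e^{2x_j}-e^{2x_{j-1}}) + r_{j-1}},$$
whose two layers match exactly two consecutive levels of the claimed continued fraction.

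Next I would iterate this recursion from $j=n$ down to $j=2$ to obtain a nested expression terminating in $r_1$. For the base case, $T_1(z)(1,0)^T = (1-\alpha_1,\alpha_1 e^{2x_1})^T$ gives $r_1 = e^{2x_1}/(1-\alpha_1)$, hence $1/r_1 = -\alpha_1 e^{-2x_1} + 1/e^{2x_1}$, which matches the two innermost layers of the stated formula. Finally, the identity $B_n/A_n = v_n = -e^{2x_n} + r_n$ supplies the outermost ``$-e^{2x_n} + (\cdots)$'' piece and completes the derivation.

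The main obstacle is not technical but conceptual: spotting the right substitution $r_j = v_j + e^{2x_j}$ (equivalently, the matrix factorisation $T_j(z) = U(-e^{-2x_j})\,L(\alpha_j e^{2x_j})\,U(e^{-2x_j})$ into elementary unipotent factors) which aligns the M\"obius action of $T_j$ with precisely two levels of the continued fraction. Once this substitution is identified, the rest reduces to elementary algebra and careful bookkeeping to match indices and the base case.
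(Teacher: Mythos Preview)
Your proposal is correct and is essentially the same argument as the paper's: the paper derives the two-level identity
\[
\frac{B_n}{A_n}=-e^{2x_n}+\cfrac{1}{-\tfrac{z}{2}m_n e^{-2x_n}+\cfrac{1}{e^{2x_n}+\tfrac{B_{n-1}}{A_{n-1}}}}
\]
directly from $(A_n,B_n)^T=T_n(A_{n-1},B_{n-1})^T$ and then invokes induction, which is exactly your recursion for $r_j=v_j+e^{2x_j}$ written without the auxiliary notation. Your explicit treatment of the base case $r_1$ and the remark about the $U L U$ factorisation of $T_j$ are pleasant additions but do not change the underlying approach.
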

\begin{proof}
Since
\begin{equation}
\begin{pmatrix} A_n\\B_n \end{pmatrix}=T_n\begin{pmatrix} A_{n-1}\\B_{n-1} \end{pmatrix},
\end{equation}
we have
\begin{equation}
\frac{B_n}{A_n}=\frac{\frac{z}{2}m_ne^{2x_n}A_{n-1}+(1+\frac{z}{2}m_n)B_{n-1}}{(1-\frac{z}{2}m_n)A_{n-1}-\frac{z}{2}m_ne^{-2x_n}B_{n-1}}.
\end{equation}
This equation is equivalent to
\begin{equation}
\frac{B_n}{A_n}=-e^{2x_n}+\cfrac{1}{-\frac{z}{2}m_ne^{-2x_n}+\cfrac{1}{e^{2x_n}+\cfrac{B_{n-1}}{A_{n-1}}}}.
\end{equation}
Hence, the proof is complete by induction.
\end{proof}
Suppose that $P_j/Q_j$ $(0\leq j \leq 2n)$ is the $j$th convergent
of the negative of the continued fraction we obtained above so that we are
approximating the Weyl function $W(z)$. Then we observe that
\[
Q_0=1,\ \ \ P_0=e^{2x_n},
\]
\begin{equation}
\frac{P_1}{Q_1}=\frac{1+\frac{z}{2}m_n}{\frac{z}{2}m_n
e^{-2x_n}},
\end{equation}
and
\begin{equation}
\frac{P_2}{Q_2}=\frac{e^{2x_{n-1}}-\frac{z}{2}m_n(e^{2x_n}-e^{2x_{n-1}})}{1-\frac{z}{2}m_n(1-e^{2(x_{n-1}-x_n)})}.
\end{equation}
Therefore, it can be verified that
\begin{equation}
\begin{split}
&Q_2=-(e^{2x_n}-e^{2x_{n-1}})Q_1+Q_0,\\
&P_2=-(e^{2x_n}-e^{2x_{n-1}})P_1+P_0,
\end{split}
\end{equation}
\begin{equation}
\begin{split}
&Q_3=\frac{z}{2}m_{n-1}e^{-2x_{n-1}}Q_2+Q_1,\\
&P_3=\frac{z}{2}m_{n-1}e^{-2x_{n-1}}P_2+P_1.
\end{split}
\end{equation}
These equations guide us to the following theorem
\begin{theorem}
\begin{equation}
\begin{split}
&Q_{2k}=-(e^{2x_{n-k+1}}-e^{2x_{n-k}})Q_{2k-1}+Q_{2k-2},\\
&P_{2k}=-(e^{2x_{n-k+1}}-e^{2x_{n-k}})P_{2k-1}+P_{2k-2},
\end{split}
\end{equation}
\begin{equation}
\begin{split}
&Q_{2k+1}=\frac{z}{2}m_{n-k}e^{-2x_{n-k}}Q_{2k}+Q_{2k-1},\\
&P_{2k+1}=\frac{z}{2}m_{n-k}e^{-2x_{n-k}}P_{2k}+P_{2k-1}.
\end{split}
\end{equation}
\end{theorem}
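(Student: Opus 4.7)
The plan is to identify these recurrences as the classical three-term recurrences satisfied by the numerators $P_j$ and denominators $Q_j$ of the convergents of a simple continued fraction, once the expansion of $W(z) = -B_n/A_n$ from the preceding theorem is brought into standard form.

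The first step is sign-bookkeeping. The continued fraction in the preceding theorem is written for $B_n/A_n$, not its negative, and its partial denominators alternate in character, with $-\tfrac{z}{2} m_k e^{-2x_k}$ at the odd levels and positive differences $e^{2x_{k+1}} - e^{2x_k}$ at the even levels. Using the identity $-\cfrac{1}{-a + X} = \cfrac{1}{a - X}$ and propagating the resulting sign inward through successive layers, $-B_n/A_n$ is rewritten as a simple continued fraction (all partial numerators equal to $1$) with partial denominators
\begin{equation*}
b_0 = e^{2x_n}, \qquad b_{2k-1} = \tfrac{z}{2}\, m_{n-k+1}\, e^{-2x_{n-k+1}}, \qquad b_{2k} = -(e^{2x_{n-k+1}} - e^{2x_{n-k}}), \quad k \geq 1.
\end{equation*}

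The second step invokes the classical recurrence: for any simple continued fraction with partial denominators $b_j$, the convergents satisfy
\begin{equation*}
P_j = b_j P_{j-1} + P_{j-2}, \qquad Q_j = b_j Q_{j-1} + Q_{j-2},
\end{equation*}
with initial data $P_{-1} = 1$, $P_0 = b_0$, $Q_{-1} = 0$, $Q_0 = 1$. Substituting the explicit $b_j$ above yields precisely the four recurrences in the statement. Running the recurrence forward from the initial conditions also reproduces the values $P_0, Q_0, P_1/Q_1, P_2/Q_2$ computed in the excerpt, giving independent consistency checks for the base cases $k=0,1$.

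No genuine obstacle arises; the only delicate point is the sign-pushing in the first step, where the overall minus sign in $W(z)=-B_n/A_n$ must be absorbed layer by layer through the original expansion. Once the continued fraction is in standard form, the stated recurrences are an immediate consequence of the general theory of continued fractions, and the induction on $k$ is automatic.
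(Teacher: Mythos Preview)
Your proposal is correct and is essentially the paper's argument spelled out in full: the paper simply writes ``By induction'' after having computed the first few convergents and displayed the continued fraction, relying on the reader to recognize the standard three-term recurrence for convergents. Your explicit sign-pushing to bring $W(z)=-B_n/A_n$ into the standard form $b_0+\cfrac{1}{b_1+\cfrac{1}{b_2+\cdots}}$ with the $b_j$ you list is exactly the missing computation behind that one-word proof, and substituting those $b_j$ into $P_j=b_jP_{j-1}+P_{j-2}$, $Q_j=b_jQ_{j-1}+Q_{j-2}$ is the induction the paper alludes to.
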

\begin{proof}
By induction.
\end{proof}
\begin{corollary}
\begin{equation}
P_jQ_{j-1}-P_{j-1}Q_j=(-1)^{j-1},\ \ \ \ 1\leq j \leq 2n.
\end{equation}
\end{corollary}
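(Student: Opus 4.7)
The plan is to prove this by a very standard induction on $j$, exploiting the fact that both the even-index and odd-index recurrences from the preceding theorem fit into the same template $X_j = \alpha_j X_{j-1} + X_{j-2}$ for $j \geq 2$, where $X$ stands for either $P$ or $Q$, and
\[
\alpha_{2k} = -(e^{2x_{n-k+1}} - e^{2x_{n-k}}), \qquad \alpha_{2k+1} = \tfrac{z}{2} m_{n-k} e^{-2x_{n-k}}.
\]
Crucially, for fixed $j$ the same $\alpha_j$ appears in both the $P$-recurrence and the $Q$-recurrence; this common coefficient is what will drop out in the determinant computation.

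Next, I set $D_j := P_j Q_{j-1} - P_{j-1} Q_j$ and plug in the recurrences for $P_j$ and $Q_j$. A one-line expansion gives
\[
D_j = (\alpha_j P_{j-1} + P_{j-2}) Q_{j-1} - P_{j-1}(\alpha_j Q_{j-1} + Q_{j-2}) = -(P_{j-1} Q_{j-2} - P_{j-2} Q_{j-1}) = -D_{j-1},
\]
valid for every $j \geq 2$. Hence $D_j$ simply flips sign at every step.

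For the base case I compute $D_1$ directly from the explicit formulas for $P_0, Q_0, P_1, Q_1$ given just before the theorem. Since $P_0 = e^{2x_n}$, $Q_0 = 1$, $P_1 = 1 + \frac{z}{2}m_n$, and $Q_1 = \frac{z}{2}m_n e^{-2x_n}$, the two $\frac{z}{2} m_n$ terms cancel and $D_1 = 1 = (-1)^0$. Combining this with the sign-flip relation gives $D_j = (-1)^{j-1}$ for all $1 \leq j \leq 2n$, as claimed.

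There is essentially no technical obstacle here; this is the classical determinantal identity for consecutive convergents of a continued fraction, and the only point that needs any attention at all is checking that the two distinct forms of the recurrence (for even and odd indices) genuinely share the same coefficient $\alpha_j$ between $P$ and $Q$, which is visible by inspection of the preceding theorem.
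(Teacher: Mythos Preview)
Your proof is correct and follows essentially the same approach as the paper: the paper's proof is simply ``Use induction and the previous theorem,'' and you have spelled out precisely that induction, including the base case $D_1 = 1$ and the step $D_j = -D_{j-1}$ obtained from the common recurrence $X_j = \alpha_j X_{j-1} + X_{j-2}$.
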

\begin{proof}
Use induction and the previous theorem.
\end{proof}
\begin{corollary}\label{cor:degrees}
\begin{equation}
\begin{split}
& \deg(P_{2k+1})=\deg(Q_{2k+1})=k+1,\\
& \deg(P_{2k})=\deg(Q_{2k})=k.
\end{split}
\end{equation}
\end{corollary}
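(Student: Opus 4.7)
The plan is a direct induction on $j$ using the two recurrences from the preceding theorem, together with explicit base cases. First I would verify the base cases: $Q_0=1$ and $P_0=e^{2x_n}$ both have degree $0$, while from the displayed expression for $P_1/Q_1$ one reads off $Q_1 = \frac{z}{2} m_n e^{-2x_n}$ and $P_1 = 1+\frac{z}{2}m_n$, both of degree $1$ with strictly positive leading coefficient (which is nonzero because $m_n>0$). This anchors the induction at $k=0$.

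For the inductive step it is convenient to prove the slightly stronger statement that, in addition to the degree bounds, the leading coefficients of $P_j$ and $Q_j$ in $z$ are nonzero. In the even recurrence
\begin{equation*}
Q_{2k} = -(e^{2x_{n-k+1}}-e^{2x_{n-k}})\,Q_{2k-1} + Q_{2k-2},
\end{equation*}
the inductive hypothesis gives $\deg Q_{2k-1}=k$ and $\deg Q_{2k-2}=k-1$, so the two terms have different degrees and no cancellation at the top can occur; the scalar $-(e^{2x_{n-k+1}}-e^{2x_{n-k}})$ is nonzero because $x_{n-k+1}>x_{n-k}$, hence $\deg Q_{2k}=k$ with nonzero leading coefficient. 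In the odd recurrence
\begin{equation*}
Q_{2k+1} = \tfrac{z}{2}m_{n-k}e^{-2x_{n-k}}\,Q_{2k} + Q_{2k-1},
\end{equation*}
the inductive hypothesis now gives $\deg Q_{2k}=k$ and $\deg Q_{2k-1}=k$; multiplying by $z$ shifts the degree of the first term up to $k+1$, while the second term remains at degree $k$, so again the leading terms live in different degrees, the leading coefficient is $\frac{1}{2}m_{n-k}e^{-2x_{n-k}}$ times that of $Q_{2k}$ (nonzero since $m_{n-k}>0$), and $\deg Q_{2k+1}=k+1$ follows.

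The argument for $P_j$ is word-for-word identical: only the base cases $P_0,P_1$ change, but they still have the correct degrees, and the recurrences are of the same form. Hence the induction closes simultaneously for both sequences, yielding the stated degrees.

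There is, in fact, no substantive obstacle: because the two summands in each recurrence differ in their $z$-degree (by $1$ in the odd step and by $1$ in the even step), there is no possibility of top-order cancellation, and the positivity $m_j>0$ together with the strict ordering $x_1<\dots<x_n$ ensures that every multiplicative scalar introduced is nonzero. The only thing to be careful about is to carry the non-vanishing of leading coefficients through the induction, which is why I would inductively prove that stronger statement rather than the bare degree equality.
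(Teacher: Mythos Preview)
Your argument is correct and is exactly the approach the paper intends: the corollary is stated without proof precisely because it follows by the straightforward induction on the recurrences of the preceding theorem that you spell out. Your extra care in tracking nonvanishing leading coefficients via $m_j>0$ and $x_{n-k}<x_{n-k+1}$ is the right way to rule out top-order cancellation.
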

In preparation for the inverse problem we introduce the following notation.

\noindent {\bf Notation:} Given a polynomial $p(z)$ we denote by
$p[j]$ the coefficient at the $j$th power of $z$.  Moreover, $n-k$ is
abbreviated to $k'$.

With this notation in place we have
\begin{theorem}\label{thm:formulasIP}

\begin{align*}
&Q_{2k}[0]=1, \quad P_{2k}[0]=e^{2x_{k'}},\\
&Q_{2k+1}[0]=0, \quad P_{2k+1}[0]=1, \qquad P_{2k+1}[1]=\frac12 \sum_{j=0}^km_{j'}
\end{align*}

\end{theorem}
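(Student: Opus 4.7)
My plan is to prove all five formulas together by strong induction on $k$, using the two pairs of recursions from the preceding theorem. The key point is simple: in the odd-step recursions, the multiplier in front of $Q_{2k}$ and $P_{2k}$ is proportional to $z$, so extracting constant terms and $z^1$ coefficients becomes essentially bookkeeping.

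The base case $k=0$ is handled by the explicit expressions
\[
Q_0=1,\ P_0=e^{2x_n},\qquad P_1=1+\tfrac{z}{2}m_n,\ Q_1=\tfrac{z}{2}m_n e^{-2x_n},
\]
written down earlier. Each of the five claimed identities reduces to inspection (note $0'=n$ and the single-term sum $\frac12\sum_{j=0}^{0}m_{j'}=\frac12 m_n$).

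For the induction step, assume the formulas up through index $2k-1$. From the even recursion one reads off
\[
Q_{2k}[0]=-(e^{2x_{(k-1)'}}-e^{2x_{k'}})Q_{2k-1}[0]+Q_{2k-2}[0]=0+1=1,
\]
and
\[
P_{2k}[0]=-(e^{2x_{(k-1)'}}-e^{2x_{k'}})P_{2k-1}[0]+P_{2k-2}[0]=-e^{2x_{(k-1)'}}+e^{2x_{k'}}+e^{2x_{(k-1)'}}=e^{2x_{k'}},
\]
using the inductive values $Q_{2k-1}[0]=0$, $Q_{2k-2}[0]=1$, $P_{2k-1}[0]=1$, $P_{2k-2}[0]=e^{2x_{(k-1)'}}$. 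Next, since the factor $\frac{z}{2}m_{k'}e^{-2x_{k'}}$ in the odd recursion carries a $z$, the constant terms satisfy $Q_{2k+1}[0]=Q_{2k-1}[0]=0$ and $P_{2k+1}[0]=P_{2k-1}[0]=1$ trivially. Finally, for the $z^1$ coefficient of $P_{2k+1}$, the odd recursion gives
\[
P_{2k+1}[1]=\tfrac12 m_{k'}e^{-2x_{k'}}\,P_{2k}[0]+P_{2k-1}[1]=\tfrac12 m_{k'}+\tfrac12\sum_{j=0}^{k-1}m_{j'}=\tfrac12\sum_{j=0}^{k}m_{j'},
\]
where I used the value $P_{2k}[0]=e^{2x_{k'}}$ just established, and the inductive hypothesis for $P_{2k-1}[1]$.

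There is no real obstacle here; the only place one needs to pay attention is the index shift between $k$ and $k'=n-k$ and the observation that the $z^1$ coefficient of $P_{2k+1}$ really only requires the constant term of $P_{2k}$ (because of the factor of $z$), which is why the even-step formula for $P_{2k}[0]$ must be proved \emph{before} the odd-step formula for $P_{2k+1}[1]$ within the same inductive round; ordering the five identities as above handles this automatically.
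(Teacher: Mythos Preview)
Your proof is correct and follows exactly the approach indicated in the paper, namely induction on $k$ using the two-term recursions for $P_j,Q_j$. The paper records only ``By induction'' without details; you have simply spelled those details out carefully, including the base case from the explicit formulas for $P_0,Q_0,P_1,Q_1$ and the correct handling of the index shift $k'=n-k$.
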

\begin{proof} By induction. \end{proof}

The Weyl function of the $n$-peakon problem reads:
\begin{proposition}
\begin{equation*}
W(z)=\frac{P_{2n}(z)}{Q_{2n}(z)}.
\end{equation*}
\end{proposition}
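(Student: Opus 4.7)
The plan is to observe that the continued fraction built in the previous theorem is finite with exactly $2n+1$ coefficient ``levels'', so its $2n$-th convergent $P_{2n}/Q_{2n}$ already equals the full value of the (negated) continued fraction, which by construction is $W(z)$.

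First, I would read off the coefficients $a_0, a_1, \ldots, a_{2n}$ of the finite continued fraction for $B_n/A_n$ displayed in the earlier theorem: $a_0 = -e^{2x_n}$; the odd-indexed $a_{2k-1} = -\frac{z}{2}m_{n-k+1}e^{-2x_{n-k+1}}$ for $1 \le k \le n$; the even-indexed $a_{2k} = e^{2x_{n-k+1}} - e^{2x_{n-k}}$ for $1 \le k \le n-1$; and the innermost $a_{2n} = e^{2x_1}$, which fits the same pattern under the convention $e^{2x_0}=0$.  This gives a total of $2n+1$ coefficients.

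Next, because $P_j/Q_j$ is defined as the $j$-th convergent of the \emph{negated} continued fraction, whose coefficients are $b_j = -a_j$, the three-term recursions proved in the previous theorem are precisely the standard convergent recursions $P_j = b_j P_{j-1} + P_{j-2}$, $Q_j = b_j Q_{j-1} + Q_{j-2}$, with initial data $P_{-1}=1, Q_{-1}=0, P_0 = b_0 = e^{2x_n}, Q_0 = 1$.  Consistency with the explicit $P_1/Q_1$ and $P_2/Q_2$ already written out in the excerpt is an immediate check.

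Finally, since the continued fraction has only finitely many coefficients ($2n+1$ of them), its value is identically equal to its $2n$-th convergent, with no remaining tail to truncate.  Hence $P_{2n}(z)/Q_{2n}(z)$ equals the negated continued fraction value, which by the earlier theorem and the definition of the Weyl function equals $-B_n(z)/A_n(z) = W(z)$.  There is no real obstacle here; the only point that needs any care is the treatment of the innermost layer $1/e^{2x_1}$, which is handled by the convention $e^{2x_0}=0$ so that the index $k=n$ fits uniformly into the generic recursion.
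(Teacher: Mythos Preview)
Your argument is correct and is precisely the intended one: the paper states this proposition without proof because it is immediate from the definition of the convergents and the fact that the continued fraction for $-B_n/A_n$ terminates after $2n$ levels, so that $P_{2n}/Q_{2n}$ coincides with the full value $W(z)$. Your handling of the innermost layer via the convention $e^{2x_0}=0$ is exactly what makes the recursion for $P_{2n},Q_{2n}$ match the last partial quotient.
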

The inverse problem hinges on the following approximation formulas:
\begin{theorem}\label{thm:approx}
\begin{equation}
\begin{split}
&
W(z)-\frac{P_{2k+1}(z)}{Q_{2k+1}(z)}=\mathcal{O}(\frac{1}{z^{2k+1}}),\
\ \ \ k=0,\dots,n-1,\ \ \ z\to\infty,\\
&
W(z)-\frac{P_{2k}(z)}{Q_{2k}(z)}=\mathcal{O}(\frac{1}{z^{2k+1}}),\
\ \ \ k=0,\dots,n-1,\ \ \ z\to\infty.
\end{split}
\end{equation}
\end{theorem}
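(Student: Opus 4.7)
The plan is to combine the determinantal identity $P_j Q_{j-1} - P_{j-1} Q_j = (-1)^{j-1}$ from the preceding Corollary with the exact degree formulas of Corollary \ref{cor:degrees}. Since the preceding Proposition identifies $W(z) = P_{2n}(z)/Q_{2n}(z)$, the discrepancy $W - P_j/Q_j$ can be written as a finite telescoping sum of consecutive differences of convergents, each of which simplifies to a single rational function of controlled degree.

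First I would write the telescoping identity
\begin{equation*}
W(z) - \frac{P_j(z)}{Q_j(z)} = \sum_{i=j+1}^{2n}\left(\frac{P_i(z)}{Q_i(z)} - \frac{P_{i-1}(z)}{Q_{i-1}(z)}\right) = \sum_{i=j+1}^{2n}\frac{(-1)^{i-1}}{Q_i(z)\, Q_{i-1}(z)},
\end{equation*}
using the determinantal identity to recast each summand. Each term on the right is a rational function whose numerator is constant and whose denominator has exactly known degree $\deg Q_i + \deg Q_{i-1}$, so the term is $\mathcal{O}(z^{-\deg Q_i - \deg Q_{i-1}})$ as $z\to\infty$.

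Next I would invoke Corollary \ref{cor:degrees} to identify the dominant summand in each case. For $j = 2k$, the lowest-degree denominator in the sum occurs at $i = 2k+1$, where $\deg(Q_{2k+1}Q_{2k}) = (k+1)+k = 2k+1$; all later terms decay strictly faster, so the overall bound is $\mathcal{O}(z^{-(2k+1)})$. For $j = 2k+1$, the first term occurs at $i = 2k+2$, with $\deg(Q_{2k+2}Q_{2k+1}) = (k+1)+(k+1) = 2k+2$, giving $\mathcal{O}(z^{-(2k+2)})$, which in particular implies the weaker claim $\mathcal{O}(z^{-(2k+1)})$ stated in the theorem.

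No step is a genuine obstacle, since this is the classical Padé-type error estimate for continued-fraction convergents adapted to our setting. The one point deserving explicit care is that Corollary \ref{cor:degrees} gives equalities rather than upper bounds on $\deg Q_i$; this ensures that $|Q_i(z)|$ really grows like $|z|^{\deg Q_i}$ at infinity and that $1/(Q_i Q_{i-1})$ decays at the advertised rate, with no accidental cancellation in the leading coefficient.
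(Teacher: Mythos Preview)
Your proposal is correct and follows essentially the same route as the paper: both use the determinantal identity from the preceding Corollary to compute consecutive differences $\frac{P_i}{Q_i}-\frac{P_{i-1}}{Q_{i-1}}=\frac{(-1)^{i-1}}{Q_iQ_{i-1}}$, invoke the degree formulas of Corollary~\ref{cor:degrees}, and telescope from $j$ up to $2n$. Your write-up is in fact a bit more explicit than the paper's (which leaves the telescoping implicit), and your observation that the odd case actually yields the sharper bound $\mathcal{O}(z^{-(2k+2)})$ is a nice addition.
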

\begin{proof}
By the previous theorem we have
\begin{equation*}
\frac{P_{2k+1}}{Q_{2k+1}}-\frac{P_{2k}}{Q_{2k}}=\frac{1}{Q_{2k+1}Q_{2k}}=\mathcal{O}(\frac{1}{z^{2k+1}}),\
\ \ \ z\to\infty.
\end{equation*}
Also,
\begin{equation*}
\frac{P_{2k}}{Q_{2k}}-\frac{P_{2k-1}}{Q_{2k-1}}=\frac{-1}{Q_{2k}Q_{2k-1}}=\mathcal{O}(\frac{1}{z^{2k-1}}),\
\ \ \ z\to\infty.
\end{equation*}
From these and the definition of the Weyl function we get
\begin{equation*}
W(z)-\frac{P_{2k-1}}{Q_{2k-1}}=\frac{P_{2n}(z)}{Q_{2n}(z)}-\frac{P_{2k-1}}{Q_{2k-1}}=\mathcal{O}(\frac{1}{z^{2k-1}}),\
\ \ \ z\to\infty.
\end{equation*}
Similarly, we see that
\begin{equation*}
W(z)-\frac{P_{2k}}{Q_{2k}}=\mathcal{O}(\frac{1}{z^{2k+1}}),\ \ \
z\to\infty.
\end{equation*}
\end{proof}
Since $Q_{2k+1}(0)=0$ we set ${Q_{2k+1}(z)=z\,q_{2k+1}(z), \,
\deg\, q_{2k+1}=k.}$

\begin{proposition}(orthogonality)\label{prop:orthogonality}
\begin{subequations}
\begin{align}
&\int x^j Q_{2k}(x) d \mu(x)=0,\ \ j=0,\dots,k-1,\ \ k=0,\dots,n-1\\
&\int q_{2k+1}(x) d \mu(x)=1,\ \ \ k=0,\dots,n-1, \\
&\int x^j q_{2k+1}(x) d \mu(x)=0,\ \ j=1,\dots,k-1,\ \
k=0,\dots,n-1.
\end{align}
\end{subequations}

\end{proposition}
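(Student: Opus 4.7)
The plan is to couple the approximation bounds of Theorem \ref{thm:approx} with the Stieltjes representation $W(z)=\gamma+\int\frac{d\mu(x)}{z-x}$ from Theorem \ref{thm:ST}. Setting $R_j(z):=Q_j(z)W(z)-P_j(z)$, those bounds combined with the degree counts of Corollary \ref{cor:degrees} give $R_{2k}(z)=\mathcal{O}(z^{-k-1})$ and $R_{2k+1}(z)=\mathcal{O}(z^{-k})$ as $z\to\infty$. The central algebraic identity I would establish first is
\[
Q_j(z)W(z)-P_j(z)=\int\frac{Q_j(x)}{z-x}\,d\mu(x),
\]
obtained by splitting $\frac{Q_j(z)}{z-x}=\frac{Q_j(z)-Q_j(x)}{z-x}+\frac{Q_j(x)}{z-x}$ inside the Stieltjes integral: the first summand, integrated against $d\mu$, together with $\gamma Q_j(z)$, is a polynomial in $z$ of degree $\deg Q_j$, and must coincide with $P_j$ by uniqueness of the polynomial part of the Laurent expansion of $Q_jW$ at infinity (any two such polynomials differ by an $\mathcal{O}(1/z)$ quantity and so must agree).

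Once this identity is in hand, I would expand $\frac{1}{z-x}=\sum_{\ell\geq 0}x^\ell z^{-\ell-1}$ to obtain
\[
R_j(z)=\sum_{\ell\geq 0}\frac{1}{z^{\ell+1}}\int x^\ell Q_j(x)\,d\mu(x).
\]
Matching the decay rate of $R_{2k}$ forces $\int x^\ell Q_{2k}(x)\,d\mu=0$ for $\ell=0,\ldots,k-1$, which is (a). The analogous reading of $R_{2k+1}=\mathcal{O}(z^{-k})$ yields $\int x^\ell Q_{2k+1}(x)\,d\mu=0$ for $\ell=0,\ldots,k-2$; after substituting $Q_{2k+1}(x)=xq_{2k+1}(x)$ this reindexes to $\int x^m q_{2k+1}(x)\,d\mu=0$ for $m=1,\ldots,k-1$, which is (c).

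For the normalization (b), I would evaluate the same identity at $z=0$. The Stieltjes representation gives $W(0)=\gamma-\int x^{-1}\,d\mu(x)=0$, while Theorem \ref{thm:formulasIP} supplies $Q_{2k+1}(0)=0$ and $P_{2k+1}[0]=1$, so the left side equals $-1$; the right side reduces to $-\int\frac{Q_{2k+1}(x)}{x}\,d\mu(x)=-\int q_{2k+1}(x)\,d\mu(x)$, yielding $\int q_{2k+1}\,d\mu=1$. The only delicate step in the whole argument is the identification of the naturally arising polynomial part with $P_j$; once that is accepted, the rest is routine asymptotic bookkeeping.
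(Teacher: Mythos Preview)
Your argument is correct and closely parallels the paper's, but the organization and the analytic tool differ. The paper works directly with the remainder $R_j(z)=W(z)Q_j(z)-P_j(z)$ and, for each fixed $j$, multiplies by a suitable power of $z$ (or divides by $z$ for the normalization) and then applies the residue theorem over a large circle $\Gamma$ enclosing the support of $\mu$: Fubini plus Cauchy's formula then extracts the individual moments $\int x^\ell Q_j(x)\,d\mu(x)$. The pole of $P_{2k+1}(z)/z$ at the origin is what produces the constant $1$ in (b).

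You instead front-load the integral representation $R_j(z)=\int\frac{Q_j(x)}{z-x}\,d\mu(x)$, which in the paper is established only afterwards (and in an equivalent form) as Proposition~\ref{prop:p2k}. From this identity you read off (a) and (c) by expanding the geometric series at infinity, and (b) by evaluating at $z=0$, which is the elementary counterpart of the paper's residue at the origin. What you gain is a cleaner, more algebraic derivation that unifies all three statements once the representation is in hand; what the paper's route gains is that the orthogonality relations themselves are obtained without ever needing to identify the polynomial $\gamma Q_j(z)+\int\frac{Q_j(z)-Q_j(x)}{z-x}\,d\mu(x)$ with $P_j$---the contour integral annihilates any polynomial contribution automatically. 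Your ``delicate step'' is thus genuinely an extra ingredient relative to the paper's proof of this proposition, though it is exactly the content of the subsequent Proposition~\ref{prop:p2k}. One small caveat: for that identification you need $R_{2k+1}(z)=o(1)$, whereas the bound you quote, $\mathcal{O}(z^{-k})$, gives only $\mathcal{O}(1)$ when $k=0$; the telescoping proof of Theorem~\ref{thm:approx} in fact yields the sharper $R_{2k+1}=\mathcal{O}(z^{-k-1})$, which closes this gap.
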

\begin{proof}
By the first approximation formula of Theorem \ref{thm:approx} we
have
\begin{equation}
W(z)Q_{2k+1}(z)-P_{2k+1}(z)=\mathcal{O}(\frac{1}{z^k}),\ \ \ \ \
k=0,\dots,n-1.
\end{equation}
Therefore,
\begin{equation*}
z^j(W(z)Q_{2k+1}(z)-P_{2k+1}(z))=\mathcal{O}(\frac{1}{z^{k-j}}),\
\ \ j=0,\dots,k-2.
\end{equation*}
Since $W(z)$ is analytic at $z=\infty$ so is the remainder and its
order there is $\mathcal{O}(\frac{1}{z^2})$.  We can therefore
use the residue calculus.  To this end we choose a circle $\Gamma$ whose radius is large enough to contain the support of
$\mu$ defined in Theorem \ref{thm:ST}.   An elementary contour integration implies then
\begin{equation*}
\frac{1}{2\pi i}\int_\Gamma\int \frac{z^j Q_{2k+1}(z)}{z-x}
d\mu(x) dz= 0,\ \ \ \ j=0,\dots,k-2.
\end{equation*}
Consequently, applying Fubini's theorem and Cauchy's residue theorem we
have
\begin{equation*}
\int x^j Q_{2k+1}(x) d \mu(x)=0,\ \ j=0,\dots,k-2,\ \
k=0,\dots,n-1.
\end{equation*}
Similarly, from the second approximation formula of Theorem
\ref{thm:approx} we have
\begin{equation}
W(z)Q_{2k}(z)-P_{2k}(z)=\mathcal{O}(\frac{1}{z^{k+1}}),\ \ \ \ \
k=0,\dots,n-1.
\end{equation}
Thus,
\begin{equation*}
\int x^j Q_{2k}(x) d \mu(x)=0,\ \ j=0,\dots,k-1,\ \ k=0,\dots,n-1.
\end{equation*}
Also we have
\begin{equation}
\frac{W(z)Q_{2k+1}(z)}{z}-\frac{P_{2k+1}(z)}{z}=\mathcal{O}(\frac{1}{z^{k+1}}),\
\ \ \ \ k=0,\dots,n-1.
\end{equation}
Since $Q_{2k+1}[0]=0$ and $P_{2k+1}[0]=1$, we have
\begin{equation*}
\frac{1}{2\pi i}\int_\Gamma\int \frac{z^{-1} Q_{2k+1}(z)}{z-x}
d\mu(x) dz-1= 0.
\end{equation*}
Hence,
\begin{equation*}
\int x^{-1} Q_{2k+1}(x) d \mu(x)=1,\ \ \ k=0,\dots,n-1.
\end{equation*}
Finally, using $Q_{2k+1}(x)=x\, q_{2k+1}(x)$ we obtain the claim.
\end{proof}
\begin{proposition}\label{prop:p2k}
\begin{subequations}
\begin{align}
&P_{2k}(z) =  \int
\frac{zQ_{2k}(z)-xQ_{2k}(x)}{x(z-x)}\,\,d\mu(x), \\
&P_{2k+1}(z) =  \int
\frac{zQ_{2k+1}(z)-xQ_{2k+1}(x)}{x(z-x)}\,\,d\mu(x).
\end{align}
\end{subequations}
\end{proposition}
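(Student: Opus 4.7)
The plan is to show that the right-hand side defines a polynomial $\tilde P(z)$ of the same degree as $P(z)$, and then to deduce $\tilde P = P$ from a decay estimate at $z=\infty$, supplemented in the odd-index edge case by a normalization at $z=0$.

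First I would perform the algebraic splitting
\begin{equation*}
\frac{zQ(z)-xQ(x)}{x(z-x)} \;=\; \frac{Q(z)}{x} + \frac{Q(z)-Q(x)}{z-x},
\end{equation*}
valid for any polynomial $Q$, and integrate against $d\mu$. Using $\gamma=\int x^{-1}\,d\mu$ from Theorem \ref{thm:ST}, this gives
\begin{equation*}
\tilde P(z) \;=\; \gamma\, Q(z) + \int\frac{Q(z)-Q(x)}{z-x}\,d\mu(x),
\end{equation*}
exhibiting $\tilde P$ as a polynomial of the same degree as $Q$. Alternatively, the partial fraction $\frac{1}{x(z-x)}=z^{-1}(x^{-1}+(z-x)^{-1})$ yields $\int\frac{d\mu(x)}{x(z-x)}=W(z)/z$, giving the dual identity
\begin{equation*}
\tilde P(z) \;=\; Q(z)\,W(z) - R(z), \qquad R(z):=\int\frac{Q(x)}{z-x}\,d\mu(x).
\end{equation*}

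Next I would control the two terms on the right of $\tilde P - P = -(QW-P)+R$ at $z=\infty$. By Theorem \ref{thm:approx} and Corollary \ref{cor:degrees}, $Q_{2k}W-P_{2k}=\mathcal{O}(z^{-k-1})$ and $Q_{2k+1}W-P_{2k+1}=\mathcal{O}(z^{-k})$. Expanding $R(z)=\sum_{j\ge 0}z^{-j-1}\int x^jQ(x)\,d\mu(x)$ and applying Proposition \ref{prop:orthogonality} (for the odd index via $Q_{2k+1}(x)=xq_{2k+1}(x)$) annihilates the first $k$, respectively $k-1$, moments, giving $R_{2k}=\mathcal{O}(z^{-k-1})$ and $R_{2k+1}=\mathcal{O}(z^{-k})$.

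In the even case, $\tilde P_{2k}-P_{2k}$ is then a polynomial of degree at most $k$ which decays like $\mathcal{O}(z^{-k-1})$, hence vanishes identically. In the odd case the decay is one power too weak to kill a polynomial of degree $k+1$ directly, so I would evaluate at $z=0$: from the polynomial form above, $\tilde P_{2k+1}(0)=\int q_{2k+1}\,d\mu=1=P_{2k+1}(0)$ by Proposition \ref{prop:orthogonality} and Theorem \ref{thm:formulasIP}. Consequently $\tilde P_{2k+1}-P_{2k+1}=z\,g(z)$ with $\deg g\le k$, and the improved decay $g(z)=\mathcal{O}(z^{-k-1})$ forces $g\equiv 0$. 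The only real obstacle is bookkeeping: the odd-index case loses one order of decay relative to the even one, and one must use the boundary datum at $z=0$ to compensate; everything else is routine once the algebraic splitting and the orthogonality estimates are assembled.
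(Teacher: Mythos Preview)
Your proof is correct and follows the paper's route: both identify the right-hand side as the polynomial part $\gamma Q+\int\frac{Q(z)-Q(x)}{z-x}\,d\mu$ of $WQ$ and then use the decay of $WQ-P$ at infinity (Theorem~\ref{thm:approx}) to force the polynomial $\tilde P-P$ to vanish. You overwork it, though: a polynomial that is $o(1)$ as $z\to\infty$ is automatically zero, so the crude bound $R=\mathcal{O}(1/z)$ already suffices and neither the orthogonality-sharpened decay of $R$ nor the phrase ``one power too weak to kill a polynomial of degree $k{+}1$'' is to the point---the degree is irrelevant. The only genuine borderline case is $k=0$ for the odd index, where Theorem~\ref{thm:approx} as stated gives merely $WQ_{1}-P_{1}=\mathcal{O}(1)$; your $z=0$ normalization handles this cleanly, whereas the paper's ``analogous'' tacitly relies on the sharper (true, but not stated) estimate $W-P_{2k+1}/Q_{2k+1}=\mathcal{O}(z^{-2k-2})$. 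One trivial typo: from $\tilde P=QW-R$ you get $\tilde P-P=(QW-P)-R$, not $-(QW-P)+R$.
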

\begin{proof}
From the second approximation formula of theorem \ref{thm:approx}
we have
\[
W(z)Q_{2k}(z)=P_{2k}(z)+\mathcal{O}(\frac{1}{z^{k+1}}),\ \ \ z
\rightarrow \infty.
\]
By Theorem \ref{thm:ST} $W(z)=\gamma +\int \frac{1}{z-x}d\mu(x)$
which implies
\begin{equation*}
P_{2k}(z)=\gamma Q_{2k}(z)+\int \frac{Q_{2k}(z)-Q_{2k}(x)}{z-x}d\mu(x).
\end{equation*}
Recalling that $\gamma=\int \frac1x d\mu(x)$ and performing elementary manipulations
we obtain the claim.  The proof for $P_{2k+1}$ is analogous.
\end{proof}
Let $c_j=\int x^j d\mu(x), \, j\in \mathbb Z$ be the $j$th moment
of the measure $d\mu(x)$ and let $I=[I_{ij}]=[c_{i+j}]$ be the
Hankel matrix of the moments of the measure $d\mu(x)$ appearing in
Theorem \ref{thm:ST}. We note that the shift $\gamma=c_{-1}$.  In
addition, let us denote by $\Delta^i_k$ the minor of the submatrix
of $I$ of size $k\times k$ which starts at $I_{0 i}$. By
convention, $\Delta_{0}^{j}=1, \, j\in \mathbb Z$.
\begin{theorem}\label{thm:sol-approx}
Suppose we are given an arbitrary discrete measure $d\mu(x)=\sum_j a_j \delta _{\lambda_j}, \quad \lambda_j >0,$
and define its shifted Stieltjes transform
\begin{equation*}
W(z)=\gamma+\int \frac{1}{z-x}d\mu(x),\qquad \gamma=\int \frac1x d\mu(x).
\end{equation*}
Let $k$ be a natural number such that $0\leq k\leq n-1$.  Then
there exists a unique solution to the approximation problem stated in Theorem \ref{thm:approx} whose polynomials satisfy degree
requirements dictated by Corollary \ref{cor:degrees}:

\begin{subequations}
\begin{align}
&Q_{2k}(z)=\frac{1}{\Delta_{k}^{1}}
\begin{vmatrix} 1&z&z^2&\dots&z^k\\
c_{0}&c_{1}&c_{2}&\dots&c_{k}\\
c_{1}&c_{2}&c_{3}&\dots&c_{k+1}\\
\vdots&\vdots&\vdots& &\vdots\\
c_{k-1}&c_{k}&c_{k+1}&\dots&c_{2k-1}
\end{vmatrix},\ \
\\
\notag\\
&Q_{2k+1}=\frac{1}{\Delta_{k+1}^{0}}
\begin{vmatrix} z&z^2&z^3&\dots&z^{k+1}\\
c_{1}&c_{2}&c_{3}&\dots&c_{k+1}\\
\vdots&\vdots&\vdots& &\vdots\\
c_{k}&c_{k+1}&c_{k+2}&\dots&c_{2k}
\end{vmatrix},\ \ \\
\notag\\
&P_{2k}(z) =  \int
\frac{zQ_{2k}(z)-xQ_{2k}(x)}{x(z-x)}\,\,d\mu(x), \\
\notag\\
&P_{2k+1}(z) =  \int
\frac{zQ_{2k+1}(z)-xQ_{2k+1}(x)}{x(z-x)}\,\,d\mu(x).
\end{align}
\end{subequations}

\end{theorem}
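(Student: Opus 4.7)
The plan is to separate existence from uniqueness and reduce the whole Padé-like problem to a linear system whose matrix is a (shifted) Hankel moment matrix.

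For uniqueness, I would first observe that once $Q$ with the correct degree is fixed, the approximation condition forces $P$ to be the polynomial part of $W(z) Q(z)$. Expanding $W(z)=\gamma + \int (z-x)^{-1}\, d\mu(x)$ and inserting the Neumann series for $(z-x)^{-1}$ converts the remaining analytic piece $\int Q(x)/(z-x)\, d\mu(x)$ into a generating function for the integrals $\int x^j Q(x)\, d\mu(x)$. Requiring this tail to vanish to the prescribed order becomes exactly the orthogonality relations of Proposition \ref{prop:orthogonality}, supplemented by the normalization from Theorem \ref{thm:formulasIP}. These constitute a square linear system for the coefficients of $Q_{2k}$ (respectively of $q_{2k+1}$) whose matrix is a shifted Hankel minor.

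For existence, I would simply verify the determinantal formulas. Expanding along the first row exhibits the correct degrees; the constant term of $Q_{2k}$ is $\Delta_k^1/\Delta_k^1 = 1$, matching $Q_{2k}[0]=1$, while the top row of $Q_{2k+1}$ has no constant entry, forcing $Q_{2k+1}[0]=0$ as required. Integrating $x^j Q_{2k}(x)$ against $d\mu$ replaces the top row by $(c_j,\dots,c_{j+k})$, which coincides with one of the lower rows whenever $0\le j\le k-1$, so the integral vanishes; the analogous repeated-row argument applied to $x^j q_{2k+1}(x)$ yields both the vanishing for $1\le j\le k-1$ and the normalization $\int q_{2k+1}\, d\mu=1$. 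The formulas for $P_{2k}$ and $P_{2k+1}$ are then inherited verbatim from Proposition \ref{prop:p2k}, which was already derived from the very same approximation conditions.

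The step I expect to be the real obstacle---or rather the technical ingredient without which nothing else closes---is the non-vanishing of the Hankel determinants $\Delta_k^1$ and $\Delta_{k+1}^0$ in the denominators. Since $d\mu$ is a positive measure with $n$ atoms on $(0,\infty)$, so is $x^{-1}\,d\mu$; the relevant Hankel blocks are Gram matrices of the monomials against one of these positive measures with sufficient support, hence positive definite for $k\le n-1$. This positivity simultaneously establishes that the explicit formulas are well defined and that the linear system identified in the uniqueness step admits a unique solution, so existence and uniqueness collapse into a single statement.
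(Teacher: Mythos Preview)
Your proposal is correct and essentially follows the paper's own argument: both reduce the approximation problem to the orthogonality relations of Proposition~\ref{prop:orthogonality} together with the normalizations from Theorem~\ref{thm:formulasIP}, obtain a square linear system with a shifted Hankel moment matrix, and read off the determinantal formulas; the formulas for $P_{2k}$, $P_{2k+1}$ are taken directly from Proposition~\ref{prop:p2k} in both cases.

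There is one stylistic difference and one genuine addition. The paper \emph{derives} the determinantal expression for $Q_{2k}$ by solving the linear system via Cramer's rule, whereas you \emph{verify} it by the repeated-row trick; these are two sides of the same coin. Your explicit positivity argument for $\Delta_k^1$ and $\Delta_{k+1}^0$ (as Gram determinants for the positive measures $d\mu$ and $x^{-1}d\mu$) is something the paper does not include in this proof at all---it is postponed to a remark after Corollary~\ref{cor:xm} with only a reference to the literature. Supplying it here is a real improvement, since without it neither the formulas nor the uniqueness claim are justified. One small caution: for the odd index the conditions recorded in Proposition~\ref{prop:orthogonality} give only $k$ equations for the $k+1$ coefficients of $q_{2k+1}$, so when you set up the ``square'' system be sure to use the sharper order $W-P_{2k+1}/Q_{2k+1}=\mathcal{O}(z^{-(2k+2)})$ that actually holds (it follows from the telescoping estimates in the proof of Theorem~\ref{thm:approx}), or equivalently add the extra relation $\int x^{k}q_{2k+1}\,d\mu=0$ that your repeated-row computation in fact produces.
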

\begin{proof}
The proof is based on the work of Stieltjes \cite{stieltjes}.  We
give only a complete argument for $Q_{2k}$, the proof for
$Q_{2k+1}(z)$ is analogous.  Both proofs rely on
Proposition \ref{prop:orthogonality}.

We write $Q_{2k}(z)=\sum_{i=0}^k q_{i}z^i$.
Now, using orthogonality conditions and the fact that
$Q_{2k}[0]=1$ we have
\[
\int x^{j}(1+\sum_{i=1}^{k} q_{i}x^{i})d\mu(x)=0,\ \
j=0,1,\dotsc,k-1,\ \ \ k=0,\dots,n-1.
\]
Recall that $c_{j}=\int x^{j}d\mu(x)$. Then we have
\[
\sum_{i=1}^{k} c_{i+j}q_{i}=-c_{j},\ \ j=0,1,\dotsc,k-1.
\]
Hence we obtain the system
\[
Bq=-c,
\]
where
\[
B=\begin{pmatrix}c_{1}&c_{2}&\dots&c_{k}\\
c_{2}&c_{3}&\dots&c_{k+1}\\
\vdots&\vdots& &\vdots\\
c_{k}&c_{k+1}&\dots&c_{2k-1} \end{pmatrix},\ \
q=(q_{1},q_{2}\dotsc,q_{k})^T,\ \
c=(c_{0},c_{1},\dotsc,c_{k-1})^T.
\]
Thus, Cramer's rule implies that
\begin{equation*}
Q_{2k}(z)=\frac{1}{\Delta_{k}^{1}}
\begin{vmatrix} 1&z&z^2&\dots&z^k\\
c_{0}&c_{1}&c_{2}&\dots&c_{k}\\
c_{1}&c_{2}&c_{3}&\dots&c_{k+1}\\
\vdots&\vdots&\vdots& &\vdots\\
c_{k-1}&c_{k}&c_{k+1}&\dots&c_{2k-1}
\end{vmatrix}.
\end{equation*}
The formulas for $P_{2k}$ and $P_{2k+1}$ were already obtained in Proposition \ref{prop:p2k}.
 \end{proof}
Using Theorem \ref{thm:formulasIP} we arrive at the
inversion formulas
\begin{corollary}\label{cor:xm}
The solution to the inverse problem is given by:
\boxed{
e^{2x_{k'}}=\frac{\Delta_{k+1}^{-1}}{\Delta_k^1}, \qquad \sum_{j=0}^km_{j'}=
\frac{2}{\Delta_{k+1}^{0}}
\begin{vmatrix} c_{-1}&c_0&c_1&\dots&c_{k-1}\\
c_{1}&c_{2}&c_{3}&\dots&c_{k+1}\\
\vdots&\vdots&\vdots& &\vdots\\
c_{k}&c_{k+1}&c_{k+2}&\dots&c_{2k}
\end{vmatrix},\ \ k=0,1,\dotsc,n-1,\\}
\end{corollary}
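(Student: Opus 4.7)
The plan is to extract the two boxed formulas by identifying $e^{2x_{k'}}$ and $\tfrac{1}{2}\sum_{j=0}^k m_{j'}$ with the low-order Taylor coefficients $P_{2k}[0]$ and $P_{2k+1}[1]$ at $z=0$, whose arithmetic meaning is supplied by Theorem \ref{thm:formulasIP}. Those coefficients are then computed from the integral representations of Proposition \ref{prop:p2k} combined with the determinantal formulas for $Q_{2k}$ and $Q_{2k+1}$ obtained in Theorem \ref{thm:sol-approx}.

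For the first formula I would evaluate Proposition \ref{prop:p2k}(a) at $z=0$. The integrand limits to $Q_{2k}(x)/x$, so
\[
e^{2x_{k'}}=P_{2k}(0)=\int \frac{Q_{2k}(x)}{x}\,d\mu(x).
\]
Inserting the determinantal expression for $Q_{2k}$ and using linearity of the determinant in its first row, each monomial $z^j$ in that row is replaced under the integral by $\int x^{j-1}\,d\mu=c_{j-1}$. The top row $(1,z,\dots,z^k)$ thus becomes $(c_{-1},c_0,\dots,c_{k-1})$, while the lower rows are unchanged; the resulting $(k+1)\times(k+1)$ Hankel minor is exactly $\Delta_{k+1}^{-1}$, yielding the first claim.

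For the second formula I need $P_{2k+1}[1]$. Writing $Q_{2k+1}(z)=z\,q_{2k+1}(z)$ and rewriting the kernel in Proposition \ref{prop:p2k}(b) as
\[
\frac{zQ_{2k+1}(z)-xQ_{2k+1}(x)}{x(z-x)}=\frac{z^{2}q_{2k+1}(z)-x^{2}q_{2k+1}(x)}{x(z-x)},
\]
the double vanishing at the origin makes the integrand regular in $x$ at $x=0$. Expanding as a polynomial in $z$ with coefficients in $x$ and collecting the $z^{1}$ term, one finds
\[
P_{2k+1}[1]=\int \frac{q_{2k+1}(x)}{x}\,d\mu(x).
\]
Since $q_{2k+1}(z)=Q_{2k+1}(z)/z$ inherits the determinantal form of Theorem \ref{thm:sol-approx} with top row $(1,z,\dots,z^{k})$ and normalization $\Delta_{k+1}^{0}$, integrating $x^{j-1}$ against $d\mu$ replaces that row by $(c_{-1},c_{0},\dots,c_{k-1})$. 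Multiplying the result by $2$, as dictated by Theorem \ref{thm:formulasIP}, delivers the second boxed formula.

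The one step requiring genuine care is the expansion that isolates $P_{2k+1}[1]$: once the factorization $Q_{2k+1}(z)=z\,q_{2k+1}(z)$ is used, no negative powers of $x$ appear and the computation reduces to routine determinantal substitution; the rest is bookkeeping.
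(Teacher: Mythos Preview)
Your proof is correct and follows exactly the paper's approach: identify $e^{2x_{k'}}=P_{2k}[0]$ and $\tfrac12\sum_{j=0}^{k} m_{j'}=P_{2k+1}[1]$ via Theorem~\ref{thm:formulasIP}, compute these as $\int Q_{2k}(x)/x\,d\mu$ and $\int q_{2k+1}(x)/x\,d\mu$ from Proposition~\ref{prop:p2k} (the paper writes the latter as $P_{2k+1}'(0)$), and then substitute the determinantal expressions of Theorem~\ref{thm:sol-approx} using linearity in the top row. One small wording slip: the $z^1$-coefficient of the kernel is $q_{2k+1}(x)/x$, which \emph{does} carry an $x^{-1}$ term, but this is harmless since $d\mu$ is supported on the strictly positive $\lambda_j$.
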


\begin{proof}
By Theorem \ref{thm:formulasIP} $e^{x_{k'}}=P_{2k}[0]$.
From Theorem \ref{thm:sol-approx} we obtain
\begin{equation*}
P_{2k}[0]=\int \frac{Q_{2k}(x)}{x} d\mu(x),
\end{equation*}
which, in view of the determinantal formula presented in Theorem \ref{thm:sol-approx},
implies the first formula.  Likewise, to get the second formula we note
\begin{equation*}
P_{2k+1}[1]=P_{2k+1}'(0)=\int \frac{q_{2k+1}(x)}{x} d\mu(x),
\end{equation*}
and use Theorems \ref{thm:sol-approx} and \ref{thm:formulasIP} to justify the claim.

\end{proof}
\begin{remark} Although we are not going to give a complete argument here, it is easy to
show using for example the technique of Lemma 5.4 in \cite{beals-sattinger-szmigielski-moment} that
all determinants appearing in the inversion are strictly positive.  Also, it is
worth noticing that all these determinants are in fact minors of the moment matrix $I$.
\end{remark}

\begin{example}
Here we solve the inverse problem for $n=1$ and $n=2$. If $n=1$,
by corollary \ref{cor:xm} we get
\begin{equation*}
e^{2x_1}=\frac{\Delta_1^{-1}}{\Delta_0^1}=c_{-1}.
\end{equation*}
So
\begin{equation}
x_1=\frac{1}{2}\log{c_{-1}}=\frac{1}{2}\log(\frac{a_1}{\lambda_1})
\end{equation}
and
\begin{equation}
m_1=\frac{2c_{-1}}{c_0}=\frac{2}{\lambda_1}.
\end{equation}
If n=2, we have
\begin{equation*}
\begin{split}
&e^{2x_1}=\frac{\Delta_2^{-1}}{\Delta_1^1}=\frac{c_{-1}c_1-c_0^2}{c_1},\\
&e^{2x_2}=\frac{\Delta_1^{-1}}{\Delta_0^1}=c_{-1}.
\end{split}
\end{equation*}
Therefore,
\begin{equation}
\begin{split}
&x_1=\frac{1}{2}\log\frac{c_{-1}c_1-c_0^2}{c_1}=\frac{1}{2}\log\frac{a_1a_2(\lambda_1-\lambda_2)^2}{\lambda_1 \lambda_2 (a_1\lambda_1+a_2\lambda_2)}, \\
&x_2=\frac{1}{2}\log{c_{-1}}=\frac{1}{2}\log\frac{a_1\lambda_2+a_2\lambda_1}{\lambda_1\lambda_2}.
\end{split}
\end{equation}
Also, we have
\begin{equation}
\begin{split}
&m_1+m_2=\frac{2}{\Delta_2^0}(c_{-1}c_2-c_0c_1)=\frac{2(c_{-1}c_2-c_0c_1)}{c_0c_2-c_1^2},\\
&m_2=\frac{2c_{-1}}{c_0}=\frac{2(a_1\lambda_2+a_2\lambda_1)}{\lambda_1\lambda_2(a_1+a_2)}.
\end{split}
\end{equation}
Therefore,
\begin{equation}
m_1=\frac{2c_1(c_{-1}c_1-c_0^2)}{c_0(c_0c_2-c_1^2)}=\frac{2(a_1\lambda_1+a_2\lambda_2)}{\lambda_1 \lambda_2 (a_1+a_2)}.
\end{equation}
These formulas (after shifting $\lambda_j \rightarrow -\lambda_j, \, a_j \rightarrow 2a_j \lambda _j$ )
are identical to what is given on page 246 in
\cite{beals-sattinger-szmigielski-moment}.

\end{example}

\bibliographystyle{plain}
\bibliography{IP}

\begin{thebibliography}{1}

\bibitem{beals-sattinger-szmigielski-moment}
Richard Beals, David~H. Sattinger, and Jacek Szmigielski.
\newblock Multipeakons and the classical moment problem.
\newblock {\em Advances in Mathematics}, 154:229--257, 2000.

\bibitem{BGS1}
M.~Bertola, M.~Gekhtman, and J.~Szmigielski.
\newblock Cauchy biorthogonal polynomials.
\newblock {\em J. Approx. Theory}, 162(4):832--867, 2010.

\bibitem{camassa-holm}
Roberto Camassa and Darryl~D. Holm.
\newblock An integrable shallow water equation with peaked solitons.
\newblock {\em Phys. Rev. Lett.}, 71(11):1661--1664, 1993.

\bibitem{dym-mckean-gaussian}
Harry Dym and Henry~P. McKean.
\newblock {\em Gaussian processes, function theory, and the inverse spectral
  problem}.
\newblock Academic Press [Harcourt Brace Jovanovich Publishers], New York,
  1976.
\newblock Probability and Mathematical Statistics, Vol. 31.

\bibitem{gantmacher-krein}
Felix~R. Gantmacher and Mark~G. Krein.
\newblock {\em Oscillation matrices and kernels and small vibrations of
  mechanical systems}.
\newblock AMS Chelsea Publishing, Providence, RI, revised edition, 2002.
\newblock Translation based on the 1941 Russian original, edited and with a
  preface by Alex Eremenko.

\bibitem{Kreinstring}
M.~G. Krein.
\newblock On inverse problems for a nonhomogeneous cord.
\newblock {\em Doklady Akad. Nauk SSSR (N.S.)}, 82:669--672, 1952.

\bibitem{novikov}
Vladimir Novikov.
\newblock Generalizations of the {C}amassa-{H}olm equation.
\newblock {\em J. Phys. A}, 42(34):342002, 14, 2009.

\bibitem{stieltjes}
T.-J. Stieltjes.
\newblock Recherches sur les fractions continues.
\newblock {\em Ann. Fac. Sci. Toulouse Sci. Math. Sci. Phys.}, 8(4):J1--J122,
  1894.

\end{thebibliography}

\end{document}